\DeclareMathOperator{\diag}{diag}
\DeclareMathOperator{\tr}{tr}
\DeclareMathOperator{\expec}{\mathbb{E}}
\DeclareMathOperator{\var}{var}
\begin{document}
\title{Pilot Contamination and Precoding in Multi-Cell TDD Systems}

\author{\IEEEauthorblockN{Jubin Jose,~\IEEEmembership{Student Member,~IEEE,}
Alexei Ashikhmin,~\IEEEmembership{Senior Member,~IEEE,}\\
Thomas L. Marzetta,~\IEEEmembership{Fellow,~IEEE,}
and Sriram Vishwanath~\IEEEmembership{Senior Member,~IEEE}}
\thanks{Results in this paper were presented in part at the IEEE International Symposium on Information Theory (ISIT) \cite{Jose_AMV_09}. J. Jose and S. Vishwanath are with the Department of Electrical and Computer Engineering, The University of Texas at Austin, Austin, TX 78712 USA (email: jubin@austin.utexas.edu; sriram@austin.utexas.edu). A. Ashikhmin and T. L. Marzetta are with Bell Laboratories, Alcatel-Lucent Inc., Murray Hill, NJ 07974 USA (email: aea@research.bell-labs.com; tlm@research.bell-labs.com).}
}
\maketitle

\newtheorem{theorem}{Theorem}
\newtheorem{corollary}[theorem]{Corollary}
\newtheorem{lemma}[theorem]{Lemma}
\newtheorem{remark}{Remark}

\begin{abstract}
This paper considers a multi-cell multiple antenna system with precoding used at the base stations for downlink transmission. For precoding at the base stations, channel state information (CSI) is essential at the base stations. A popular technique for obtaining this CSI in time division duplex (TDD) systems is uplink training by utilizing the reciprocity of the wireless medium. This paper mathematically characterizes the impact that uplink training has on the performance of such multi-cell multiple antenna systems. When non-orthogonal training sequences are used for uplink training, the paper shows that the precoding matrix used by the base station in one cell becomes corrupted by the channel between that base station and the users in other cells in an undesirable manner. This paper analyzes this fundamental problem of pilot contamination in multi-cell systems. Furthermore, it develops a new multi-cell MMSE-based precoding method that mitigate this problem. In addition to being a linear precoding method, this precoding method has a simple closed-form expression that results from an intuitive optimization problem formulation. Numerical results show significant performance gains compared to certain popular single-cell precoding methods.
\end{abstract}
\IEEEpeerreviewmaketitle
\section{Introduction}
\label{intro}
Multiple antennas, especially at the base-station, have now become an accepted (and in someways, a central) feature of cellular networks. These networks have been studied extensively over the past one and a half decades (see\cite{MIMO_wireless_text} and references therein). It is now well understood that channel state information (CSI) at the base station is an essential component when trying to maximize network throughput. Systems with varying degrees of  CSI have been studied in great detail in literature. The primary framework under which these have been studied is frequency division duplex (FDD) systems, where the CSI is typically obtained through  (limited) feedback. There is a rich body of work in jointly designing this feedback mechanism with (pre)coding strategies to maximize throughput in MIMO downlink \cite{SH05,DLZ05,N06,YJG07,Ashikhmin_ISIT07,HHA07:SDMA_feedback_rate}. Time division duplex (TDD) systems, however, have a fundamentally different architecture from the ones studied in FDD systems \cite{Marzetta:howmuchtraining,JAWV08}. The goal of this paper is to develop a clear understanding of mechanisms for acquiring CSI and subsequently designing precoding strategies for multi-cell MIMO TDD systems.

An important distinguishing feature of TDD systems is the notion of {\em reciprocity}, where  the reverse channel is used as an estimate of the forward channel. Arguably, this is one of the best advantages of a TDD architecture, as it eliminates the need for feedback, and uplink training together with the reciprocity of the wireless medium \cite{MH06} is sufficient to provide us with the desired CSI. However, as we see next, this channel estimate is not without issues that must be addressed before it proves useful.

In this paper, we consider uplink training and transmit precoding in a multi-cell scenario with $L$ cells, where each cell consists of a base station with $M$ antennas and $K$ users with single antenna each. The impact of uplink training on the resulting channel estimate (and thus system performance) in the multi-cell scenario is significantly different from that in a single-cell scenario. In the multi-cell scenario, non-orthogonal training sequences (pilots) must be utilized, as orthogonal pilots would need to be least $K \times L$ symbols long which is infeasible for large $L$. In particular, short channel coherence times due to mobility do not allow for such long training sequences.

This non-orthogonal nature causes {\em pilot contamination}, which is encountered only when analyzing a multi-cell MIMO system with training, and is lost when narrowing focus to a single-cell setting or to a multi-cell setting where channel information is assumed available at no cost. Pilot contamination occurs when the channel estimate at the base station in one cell becomes polluted by users from other cells. Thus, our goal in this paper is, first, to study the impact of pilot contamination (and thus achievable rates), and then, to develop methods that mitigate this contamination. We note that pilot contamination must also figure in Cooperative MIMO (also called Network MIMO \cite{Foschini:coordinatingnetwork,VLV07}) where clusters of base stations are wired together to create distributed arrays, and where pilots must be re-used over multiple clusters.

\begin{figure}[ht]
\centering
\scalebox{0.6}{\input{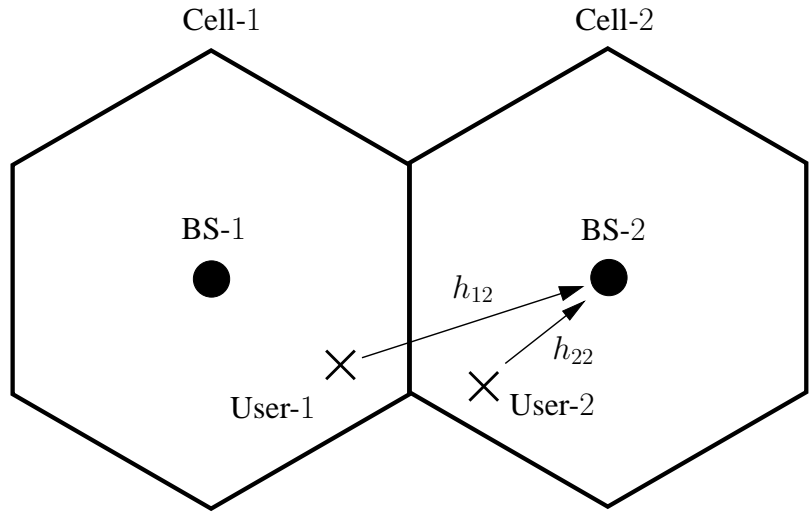}}
\DeclareGraphicsExtensions. \caption{A two-cell example with one user in each cell. Both users transmit non-orthogonal pilots during uplink training, which leads to pilot contamination at both the base stations.} \label{two_cell}
\end{figure}

The fundamental problem associated with pilot contamination is evident even in the simple multi-cell scenario shown in Figure \ref{two_cell}. Consider two cells $i\in\{1,2\}$, each consisting of one base station and one user. Let $\underline{h}_{ij}$ denote the channel between the base station in the $i$-th cell and the user in the $j$-th cell. Let the training sequences used by both the users be same. In this case, the MMSE channel estimate of $\underline{h}_{22}$ at the base station in the $2$-nd cell is $\hat{\underline{h}}_{22}=c_1 \underline{h}_{12} + c_2 \underline{h}_{22} + c \underline{w}$. Here $c_1$, $c_2$ and $c$ are constants that depend on the propagation factors and the transmit powers of mobiles,  and $\underline{w}$ is $\mathcal{CN}(0,\mathbf{I})$ additive noise. The base station in the $2$-nd cell uses this channel estimate to form a precoding vector $\underline{a}_2=f(\hat{\underline{h}}_{22}),$ which is usually aligned with the channel estimate, that is $\underline{a}_2=const\cdot \hat{\underline{h}}_{22}$. However, by doing this, the base station (partially) aligns the transmitted signal with both $\underline{h}_{22}$ (which is desirable) and $\underline{h}_{12}$ (which is undesirable). Both signal ($\underline{h}_{22}\underline{a}_2^{\dagger}$) and interference ($\underline{h}_{12}\underline{a}_2^{\dagger}$) statistically behave similarly. Therefore, the general assumption that the precoding vector used by a base station in one cell is uncorrelated with the channel to users in other cells is not valid with uplink training using non-orthogonal training sequences. This  fundamental problem is studied in further detail in the rest of this paper.

To perform this analysis, we first develop analytical expressions using techniques similar to those used in \cite{Marzetta:howmuchtraining,JAWV08}. For the setting with one user in every cell, we derive closed-form expressions for achievable rates. These closed-form expressions allow us to determine the extent to which pilot contamination impacts system performance. In particular, we show that the achievable rates can saturate with the number of antennas at the base station $M$. This analysis will allow system designers to determine the appropriate frequency/time/pilot reuse factor to maximize system throughput in the presence of pilot contamination.

In the multi-cell scenario, there has been significant work on utilizing coordination among base stations \cite{SSZ04,ZD04,Foschini:coordinatingnetwork,VLV07} when CSI is available. This existing body of work focuses on the gain that can be obtained through coordination of the base stations. Dirty paper coding  based approaches and joint beamforming/precoding approaches are considered in \cite{ZD04}. Linear precoding methods for clustered networks with full intra-cluster coordination and limited inter-cluster coordination are proposed in \cite{Zhang_CAGH_2009}. These approaches generally require ``good'' channel estimates at the base stations. Due to non-orthogonal training sequences, the resulting channel estimate (of the channel between a base station and all users) can be shown to be rank deficient. We develop a multi-cell MMSE-based precoding method that depends on the set of training sequences assigned to the users. Note that this MMSE-based precoding is for the general setting with multiple users in every cell. Our approach does not need coordination between base stations required by the joint precoding techniques. When coordination is present, this approach can be applied at the inter-cluster level. The MMSE-based precoding derived in this paper has several advantages. In addition to being a linear precoding method, it has a simple closed-form expression that results from an intuitive optimization problem formulation. For many training sequence allocations, numerical results show that our approach gives significant gains over certain popular single-cell precoding methods including zero-forcing precoding.

\subsection{Related Work}
\label{sec:relwork}

Over the past decade, a variety of aspects of downlink and uplink transmission problems in a single cell setting have been studied. In information theoretic literature, these problems are studied as the broadcast channel (BC) and the multiple access channel (MAC) respectively. For Gaussian BC and general MAC, the problems have been studied for both single and multiple antenna cases. The sum capacity of the multi-antenna Gaussian BC has been shown to be achieved by dirty paper coding (DPC) in \cite{Caire:gaussianbc,Viswanath:sumcapacity,Vishwanath:duality,YC04}. It was shown  in \cite{Weingarten:capacity} that DPC characterizes the full capacity region of the multi-antenna Gaussian BC. These results assume perfect CSI at the base station and the users. In addition, the DPC technique is computationally challenging to implement in practice. There has been significant research focus on reducing the computational complexity at the base station and the users. In this regard, different precoding schemes with low complexity have been proposed. This body of work  \cite{Boccardi:precoding,Airy:transmit_precoding,HPS05_part2,SVH06,Shen:low_complexity} demonstratesƒs that sum rates close to sum capacity can be achieved with much lower computational complexity. However, these results assume perfect CSI at the base station and the users.

The problem of lack of channel CSI is usually studied by considering one of the following two settings. As discussed before, in the first setting, CSI at users is assumed to be available and a limited feedback link is assumed to exist from the users to the base station. In \cite{SH05,N06,YJG07,HHA07:SDMA_feedback_rate,Ashikhmin_ISIT07,JCCU08} such a setting is considered. In \cite{N06}, the authors show that at high signal to noise ratios (SNRs), the feedback rate required per user must grow linearly with the SNR (in dB) in order to obtain the full MIMO BC multiplexing gain. The main result in \cite{YJG07} is that the extent of CSI feedback can be reduced by exploiting multi-user diversity. In \cite{Ashikhmin_ISIT07} it is shown that nonrandom vector quantizers can significantly increase the MIMO downlink throughput.
 In \cite{HHA07:SDMA_feedback_rate}, the authors design a joint CSI quantization, beamforming and scheduling algorithm to attain optimal throughput scaling. In the next setting, time-division duplex systems are considered and channel training and estimation error are accounted for in the net achievable rate. This approach is used in \cite{Marzetta:howmuchtraining,JAWV08,GPS08}. In \cite{Marzetta:howmuchtraining}, the authors give a lower bound on sum capacity and demonstrate that it is always beneficial to increase the number of antennas at the base station. In \cite{JAWV08}, the authors study a heterogeneous user setting and present scheduling and precoding methods for this setting. In \cite{GPS08}, the authors consider two-way training and propose two variants of linear MMSE precoders as alternatives to linear zero-forcing precoder used in \cite{Marzetta:howmuchtraining}.

Given this extensive body of literature in single-cell systems, the main contribution of this paper is in understanding multi-cell systems with channel training. Its emphasis is on TDD systems, which are arguably poorly studied compared to FDD systems. Specifically, the main contributions are to demonstrate the pilot contamination problem associated with uplink training, understand its impact on the operation of multi-cell MIMO TDD cellular systems, and develop a new precoding method to mitigate this problem.

\subsection{Organization}
The rest of this paper is organized as follows. In Section \ref{sec:sysmodel}, we describe the multi-cell system model. In Section \ref{sec:commscheme}, we explain the communication scheme and the technique to obtain achievable rates. We analyze the effect of pilot contamination in Section \ref{sec:pilot_cont}, and give the details of the new precoding method in Section \ref{sec:mmseprecoding}. We present few numerical results in Section \ref{sec:numresults}. Finally, we provide our concluding remarks in Section \ref{sec:conc}.

\subsection{Notation}
We use bold font variables to denote matrices, and underline variables to denote vectors (can be row or column vectors). $(\cdot)^T$ denotes the transpose and $(\cdot)^{\dagger}$ denotes the Hermitian transpose, $\tr\{\cdot \}$ denotes the trace operation, $(\cdot)^{-1}$ denotes the inverse operation, and $\|\cdot\|$ denotes the two-norm. $\diag \{\underline{a}\}$ denotes a diagonal matrix with diagonal entries equal to the components of $\underline{a}$. $\mathbb{E}[\cdot]$ and $\var\{\cdot\}$ stand for expectation and variance operations, respectively. 

\section{Multi-Cell TDD System Model}
\label{sec:sysmodel}

\begin{figure}[!t]
\centering
\scalebox{0.6}{\input{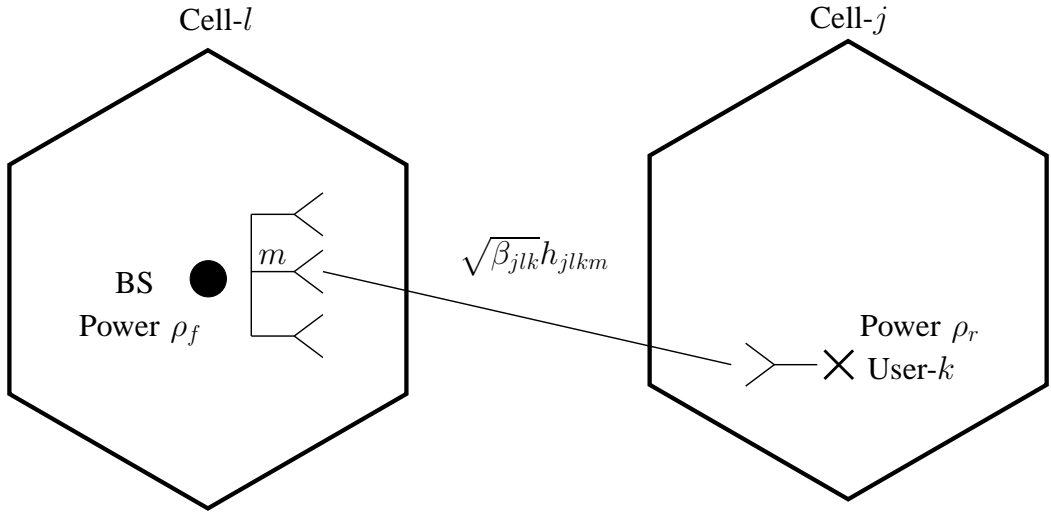}}
\caption{System model showing the base station in $l$-th cell and the $k$-th user in $j$-th cell} \label{fig:sys_model}
\end{figure}

We consider a cellular system with $L$ cells numbered $1,2,\cdots,L$. Each cell comprises of one base station with $M$ antennas and $K(\le M)$ single-antenna users. Let the average power (during transmission) at the base station be $p_f$ and the average power (during transmission) at each user be $p_r$. The propagation factor between the $m$-th base station antenna of the $l$-th cell and the $k$-th user of the $j$-th cell is $\sqrt{\beta_{jlk}} h_{jlkm}$\footnote{For compact notation, we do not separate the subscript or superscript indices using commas throughout the paper.}, where $\{\beta_{jlk}\}$ are non-negative constants and assumed to be known to everybody, and $\{h_{jlkm}\}$ are independent and identically distributed (i.i.d.) zero-mean, circularly-symmetric complex Gaussian $\mathcal{CN}(0,1)$ random variables and known to nobody. This system model is shown in Figure \ref{fig:sys_model}. The above assumptions are fairly accurate and justified due to the following reason. The $\{\beta_{jlk}\}$ values model path-loss and shadowing that change slowly and can be learned over long period of time, while the $\{h_{jlkm}\}$ values model fading that change relatively fast and must be learned and used very quickly. Since the cell layout and shadowing are captured using the constant $\{\beta_{jlk}\}$ values, for the purpose of this paper, the specific details of the cell layout and shadowing model are irrelevant. In other words, any cell layout and any shadowing model can be incorporated with the above abstraction.

We assume channel reciprocity for the forward and reverse links, i.e., the propagation factor $\sqrt{\beta_{jlk}} h_{jlkm}$ is same for both forward and reverse links, and block fading, i.e., $\{h_{jlkm}\}$ remains constant for a duration of $T$ symbols. Note that we allow for a constant factor variation in forward and reverse propagation factors through the different average power constraints at the base stations and the users. The additive noises at all terminals are i.i.d. $\mathcal{CN}(0,1)$ random variables. The system equations describing the signals received at the base station and the users are given in the next section.

\begin{remark}
The above channel model does not directly incorporate frequency-selective fading. However, we assume orthogonal frequency-division multiplexing (OFDM) operation. Therefore, in each OFDM sub-band, we can consider the above channel model. The coherence time $T$ is for the OFDM sub-band. The details of OFDM (including cyclic prefix) are omitted as this is not the main focus of this paper. The block fading model is widely used to capture channel coherence, and is known to model this fairly well. Furthermore, we assume that there is time synchronization present in the system for coherent uplink transmission.
\end{remark}

\section{Communication Scheme}
\label{sec:commscheme}

The communication scheme consists of two phases: uplink training and data transmission. Uplink training phase consists of users transmitting training pilots, and base stations obtaining channel estimates. Data transmission phase consists of base stations transmitting data to the users through transmit precoding. Next, we describe these phases briefly and provide a set of achievable data rates using a given precoding method.

\subsection{Uplink Training}
At the beginning of every coherence interval, all users (in all cells) transmit training sequences, which are $\tau$ length column vectors. Let $\sqrt{\tau}\underline{\psi}_{jk}$ (normalized such that $\underline{\psi}_{jk}^{\dagger}\underline{\psi}_{jk}=1$) be the training vector transmitted by the $k$-th user in the $j$-th cell. Consider the base station of the $l$-th cell. The $\tau$ length column vector received at the $m$-th antenna of this base station is
\begin{equation}
\label{eq:ylm}
\underline{y}_{lm}=\sum_{j=1}^{L} \sum_{k=1}^{K} \sqrt{p_r \tau \beta_{jlk}} h_{jlkm} \underline{\psi}_{jk} + \underline{w}_{lm},
\end{equation}
where $\underline{w}_{lm}$ is the additive noise. Let $\mathbf{Y}_l=[\underline{y}_{l1} \: \underline{y}_{l2} \: \cdots \: \underline{y}_{lM}]$ ($\tau \times M$ matrix), 
$\mathbf{W}_l=[\underline{w}_{l1} \: \underline{w}_{l2} \: \cdots \: \underline{w}_{lM}]$ ($\tau \times M$ matrix), 
$\mathbf{\Psi}_j = [\underline{\psi}_{j1} \: \underline{\psi}_{j2} \: \cdots \: \underline{\psi}_{jK}]$ ($\tau \times K$ matrix), $\mathbf{D}_{jl}=\diag\{[\beta_{jl1} \: \beta_{jl2} \: \cdots \: \beta_{jlK}]\}$, and
\[ \mathbf{H}_{jl} = \left[ \begin{array}{ccc}
h_{jl11} & \cdots & h_{jl1M} \\
\vdots & \ddots & \vdots \\
h_{jlK1} & \cdots & h_{jlKM} \end{array} \right].\]
From (\ref{eq:ylm}), the signal received at this base station can be expressed as
\begin{equation}
\label{eq:Yl}
\mathbf{Y}_{l} = \sqrt{p_r \tau} \sum_{j=1}^{L} \left(\mathbf{\Psi}_j \mathbf{D}_{jl}^{\frac{1}{2}} \mathbf{H}_{jl}\right)+\mathbf{W}_{l}.
\end{equation}
The MMSE estimate of the channel $\mathbf{H}_{il}$ given $\mathbf{Y}_{l}$ in (\ref{eq:Yl}) is
\begin{equation}
\label{eq:mmse}
\mathbf{\hat{H}}_{jl}=\sqrt{p_r \tau } \mathbf{D}_{jl}^{\frac{1}{2}} \mathbf{\Psi}_j^{\dagger} \left(\mathbf{I}+p_r \tau \sum_{i=1}^{L} \mathbf{\Psi}_i \mathbf{D}_{il} \mathbf{\Psi}_i^{\dagger}\right)^{-1}\mathbf{Y}_{l}.
\end{equation}
This MMSE estimate in (\ref{eq:mmse}) follows from standard results in estimation theory (for example see \cite{KSH00}). We denote the MMSE estimate of the channel between this base station and all users by
$\hat{\mathbf{H}}_l = [\mathbf{\hat{H}}_{1l} \: \mathbf{\hat{H}}_{2l} \: \cdots \: \mathbf{\hat{H}}_{Ll}]$. This notation is used later in Section \ref{sec:pilot_cont}.

\subsection{Downlink Transmission}

Consider the base station of the $l$-th cell. Let the information symbols to be transmitted to users in the $l$-th cell be $\underline{q}_l = [q_{l1} \: q_{l2} \: \cdots \: q_{lK}]^T$ and the $M \times K$ linear precoding matrix be $\mathbf{A}_l=f(\hat{\mathbf{H}}_l)$. The function $f(\cdot)$ corresponds to the specific (linear) precoding method performed at the base station. The signal vector transmitted by this base station is $\mathbf{A}_l\underline{q}_l$. We consider transmission symbols and precoding methods such that $\mathbb{E}[\underline{q}_l]=\underline{0}$, $\mathbb{E}[\underline{q}_l\underline{q}_l^{\dagger}]=\mathbf{I}$, and $\tr\{\mathbf{A}_l^{\dagger}\mathbf{A}_l\}=1$. These (sufficient) conditions imply that the average power constraint at the base station is satisfied.

Now, consider the users in the $j$-th cell. The noisy signal vector received by these users is
\begin{equation}
\label{rcvd:vec}
\underline{x}_{j} = \sum_{l=1}^{L} \sqrt{p_f} \mathbf{D}_{jl}^{\frac{1}{2}}\mathbf{H}_{jl} \mathbf{A}_l\underline{q}_l + \underline{z}_{j}, \mbox{ ($K\times 1$ vector)}
\end{equation}
where $\underline{z}_{j}$ is the additive noise. From (\ref{rcvd:vec}), the signal received by the $k$-th user can be expressed as
\begin{equation}
\label{rcvd:signal}
{x}_{jk} = \sum_{l=1}^{L} \sum_{i=1}^{K} \sqrt{p_f \beta_{jlk}} [h_{jlk1} \: h_{jlk2} \: \cdots \: h_{jlkM}] \underline{a}_{li}{q}_{li} + {z}_{jk},
\end{equation}
where $\underline{a}_{li}$ is the $i$-th column of the precoding matrix $\mathbf{A}_l$ and ${z}_{jk}$ is the $k$-th element of $\underline{z}_{j}$.

\begin{remark}
The precoding method considered here is linear. This is of high practical value due to its low online complexity. Note that we do not consider non-linear precoding methods in this paper. We do not consider any training in the forward link as well. Therefore, the users do not have any channel knowledge. However, the pilot contamination problem is due to uplink training with non-orthogonal pilots, and hence, it will be present in all these settings.
\end{remark}
\subsection{Achievable Rates}
We provide a set of achievable rates using the above mentioned communication scheme by assuming worst-case Gaussian noise. This method was suggested in \cite{Marzetta:howmuchtraining} as a lower bounding technique. 
Equation (\ref{rcvd:signal}) can be written in the form
\begin{eqnarray}
{x}_{jk}& = &\sum_{l = 1}^{M} \sum_{i = 1}^{K} g^{jk}_{li}{q}_{li} + {z}_{jk}, \nonumber \\
\label{eq:xjk}
& = &\expec\left[{g}^{jk}_{jk}\right] {q}_{jk} + \left({g}^{jk}_{jk}-\expec\left[{g}^{jk}_{jk}\right]\right) {q}_{jk} + 
\sum_{(l,i)\ne (j,k)} g^{jk}_{li}{q}_{li} + {z}_{jk},
\end{eqnarray}
where $g^{jk}_{li}=\sqrt{p_f \beta_{jlk}} \left[h_{jlk1} \cdots h_{jlkM}\right] \underline{a}_{li}$. The effective noise in (\ref{eq:xjk}) is 
\begin{eqnarray}
\label{eq:effnoise}
{z}_{jk}^{'} = \left({g}^{jk}_{jk}-\expec\left[{g}^{jk}_{jk}\right]\right) {q}_{jk}+\sum_{(l,i)\ne (j,k)} g^{jk}_{li}{q}_{li} + {z}_{jk}.
\end{eqnarray}
 The random variables ${q}_{jk}$ and ${z}_{jk}^{'}$ given by (\ref{eq:effnoise}) are uncorrelated due to the following. ${q}_{jk}$ is clearly independent of ${q}_{li}$ for all $(l,i)\ne (j,k)$ and ${z}_{jk}$. In addition, ${q}_{jk}$ is independent of ${g}^{jk}_{jk}$. Therefore,
\begin{eqnarray}
\expec\left[\left({g}^{jk}_{jk}-\expec\left[{g}^{jk}_{jk}\right]\right) |{q}_{jk}|^2\right]=\expec\left[\left({g}^{jk}_{jk}-\expec\left[{g}^{jk}_{jk}\right]\right)\right] \expec\left[|{q}_{jk}|^2\right]=0.
\end{eqnarray} The variance of the effective noise is $$\expec\left[|{z}_{jk}^{'}|^2\right]=\var\left\{|{g}^{jk}_{jk}|\right\}+\sum_{(l,i)\ne (j,k)}\expec\left[|g^{jk}_{li}|^2\right]+1.$$
 Using the fact that worst-case uncorrelated noise distribution is independent Gaussian noise \cite{Hassibi:howmuchtraining} with the same variance, we obtain the following set of achievable rates: 
\begin{equation}
\label{eq:achrate}
R_{jk}= C\left(\frac{\left|\expec\left[{g}^{jk}_{jk}\right]\right|^2}{1+\var\left\{\left|{g}^{jk}_{jk}\right|\right\}+\sum_{(l,i)\ne (j,k)}\expec\left[\left|g^{jk}_{li}\right|^2\right]}\right),
\end{equation}
where $C(\theta)=\log_2(1+\theta)$.

\begin{remark}
The set of achievable rates given by (\ref{eq:achrate}) is valid for any linear precoding method, and depends on the precoding method through the expectation and variance terms appearing in (\ref{eq:achrate}). 
\end{remark}

Similar achievable rates are used in the single-cell setting as well to study and/or compare precoding methods. Next, we perform pilot contamination analysis for zero-forcing precoding.

\section{Pilot Contamination Analysis}
\label{sec:pilot_cont}

We analyze the pilot contamination problem in the following setting: one user per cell ($K=1$), same training sequence used by all users ($\underline{\psi}_{j1}=\underline{\psi},\forall j$) and zero-forcing (ZF) precoding. We consider this setting as it captures the primary effect of pilot contamination which is the correlation between the precoding matrix (vector in this setting) used by the base station in a cell and channel to users in other cells. We provide simple and insightful analytical results in this setting. As mentioned earlier, we emphasize that the pilot contamination problem results from uplink training with non-orthogonal training sequences, and hence, it is not specific to the setting considered here. However, the level of its impact on the achievable rates would vary depending on the system settings.

In order to simplify notation, we drop the subscripts associated with the users in every cell. In this section, $\mathbf{H}_{jl}$, $\mathbf{\hat{H}}_{jl}$ and $\mathbf{A}_l$ are vectors and we denote these using $\underline{h}_{jl}$, $\underline{\hat{h}}_{jl}$ and $\underline{a}_l$, respectively. For zero-forcing precoding, the precoding vector used at the base station in the $l$-th cell is given by $\underline{a}_l =\underline{\hat{h}}_{ll}^{\dagger}/\|\sqrt{\underline{\hat{h}}_{ll}}\|.$
The user in the $j$-th cell receives signal from its base station and from other base stations. From (\ref{rcvd:vec}), this received signal is
\begin{equation}
\label{rcvd:xj}
{x}_{jk} = \sqrt{p_f \beta_{jj}} \underline{h}_{jj} \underline{a}_{j} {q}_{j} + \sum_{l \ne j} \sqrt{p_f \beta_{jl}} \underline{h}_{jl} \underline{a}_{l} {q}_{l} + {z}_{j}.
\end{equation}
We compute first and second order moments of the effective channel gain and the inter-cell interference and use these to obtain a simple expression for the achievable rate given by (\ref{eq:achrate}).

In the setting considered here, the MMSE estimate of $\underline{h}_{jl}$ based on $\mathbf{Y}_{l}$ given by (\ref{eq:mmse}) can be simplified using matrix inversion lemma and the fact that $\underline{\psi}^{\dagger}\underline{\psi}=1$ as follows:
\begin{eqnarray}
\underline{\hat{h}}_{jl}&=&\sqrt{p_r \tau \beta_{jl}} \underline{\psi}^{\dagger} \left(\mathbf{I}+\underline{\psi}\left(p_r \tau \sum_{i=1}^{L} \beta_{il}\right) \underline{\psi}^{\dagger}\right)^{-1}\mathbf{Y}_{l}, \nonumber \\
&=&\sqrt{p_r \tau \beta_{jl}} \underline{\psi}^{\dagger} \left(\mathbf{I}-\frac{\underline{\psi}\left(p_r \tau \sum_{i=1}^{L} \beta_{il}\right) \underline{\psi}^{\dagger}}{1+p_r \tau \sum_{i=1}^{L} \beta_{il}}\right)\mathbf{Y}_{l}, \nonumber \\
&=&\sqrt{p_r \tau \beta_{jl}}  \left(\underline{\psi}^{\dagger}-\frac{\left(p_r \tau \sum_{i=1}^{L} \beta_{il}\right) }{1+p_r \tau \sum_{i=1}^{L} \beta_{il}}\underline{\psi}^{\dagger}\right)\mathbf{Y}_{l}, \nonumber \\
&=&\frac{\sqrt{p_r \tau \beta_{jl}}}{1+ p_r \tau \sum_{i=1}^{L} \beta_{il}} \underline{\psi}^{\dagger}\mathbf{Y}_{l}. \nonumber
\end{eqnarray}
Since $\underline{\psi}^{\dagger}\mathbf{Y}_{l}$ is proportional to the MMSE estimate of $\underline{h}_{jl}$ for any $j$, we have
\begin{equation}
\label{eq:ratio}
\frac{\underline{\hat{h}}_{jl}^{\dagger}}{\|\underline{\hat{h}}_{jl}\|}=\frac{\mathbf{Y}_{l}^{\dagger}\underline{\psi}}{\|\underline{\psi}^{\dagger}\mathbf{Y}_{l}\|}, \forall j.
\end{equation}
Using (\ref{eq:ratio}), we obtain
\begin{eqnarray}
\underline{h}_{jl} \underline{a}_{l} & = & \underline{h}_{jl} \frac{\underline{\hat{h}}_{jl}^{\dagger}}{\left\|\underline{\hat{h}}_{jl}^{\dagger}\right\|}, \nonumber \\
\label{eq:ha}
& = & \left\|\underline{\hat{h}}_{jl}^{\dagger}\right\| + \underline{\tilde{h}}_{jl} \frac{\underline{\hat{h}}_{jl}^{\dagger}}{\left\|\underline{\hat{h}}_{jl}^{\dagger}\right\|},
\end{eqnarray}
where $\underline{\tilde{h}}_{jl} = \underline{h}_{jl}-\underline{\hat{h}}_{jl}$. From the properties of MMSE estimation, we know that $\underline{\hat{h}}_{jl}$ is independent of $\underline{\tilde{h}}_{jl}$, $\underline{\hat{h}}_{jl}$ is $\mathcal{CN}\left(\underline{0},\frac{p_r \tau \beta_{jl}}{1+ p_r \tau \sum_{i=1}^{L} \beta_{il}}\mathbf{I}\right)$, and $\underline{\tilde{h}}_{jl}$ is $\mathcal{CN}\left(\underline{0},\frac{1+ p_r \tau \sum_{i\ne j} \beta_{il}}{1+ p_r \tau \sum_{i=1}^{L} \beta_{il}}\mathbf{I}\right)$. These results are used next.

From (\ref{eq:ha}), we get
\begin{eqnarray}
\expec\left[\underline{h}_{jl} \underline{a}_{l}\right] & = & \expec\left[\left\|\underline{\hat{h}}_{jl}^{\dagger}\right\|\right], \nonumber \\
\label{eq:haexpec}
& = & \sqrt{\frac{p_r \tau \beta_{jl}}{1+ p_r \tau \sum_{i=1}^{L} \beta_{il}}}\expec\left[\theta\right],
\end{eqnarray}
where $\theta = \sqrt{\sum_{m=1}^{M}|u_m|^2}$ and $\{u_m\}$ is i.i.d. $\mathcal{CN}(0,1)$. From (\ref{eq:ha}), we also have
\begin{eqnarray}
\expec \left[\left\|\underline{h}_{jl} \underline{a}_{l}\right\|^2\right] & = & \expec\left[\left\|\underline{\hat{h}}_{jl}^{\dagger}\right\|^2 \right] + \expec \left[\frac{\underline{\hat{h}}_{jl}}{\left\|\underline{\hat{h}}_{jl}^{\dagger}\right\|}\underline{\tilde{h}}_{jl}^{\dagger} \underline{\tilde{h}}_{jl} \frac{\underline{\hat{h}}_{jl}^{\dagger}}{\left\|\underline{\hat{h}}_{jl}^{\dagger}\right\|}\right], \nonumber \\
\label{eq:ha2expec}
& = & \frac{p_r \tau \beta_{jl}}{1+ p_r \tau \sum_{i=1}^{L} \beta_{il}}\expec[\theta^2] + \frac{1+ p_r \tau \sum_{i\ne j} \beta_{il}}{1+ p_r \tau \sum_{i=1}^{L} \beta_{il}}.
\end{eqnarray}

Next, we state two lemmas required to obtain a closed-form expression for the achievable rate.
\begin{lemma}
\label{lemma:effch}
The effective channel gain in (\ref{rcvd:xj}) has expectation 
\[
\expec\left[\sqrt{p_f \beta_{jj}} \underline{h}_{jj} \underline{a}_{j} \right]= \left(p_f \beta_{jj}\frac{p_r \tau \beta_{jj}}{1+ p_r \tau \sum_{i=1}^{L} \beta_{ij}}\right)^{1/2}\expec[\theta]\]
and variance 
\[\var\left\{\left|\sqrt{p_f \beta_{jj}} \underline{h}_{jj} \underline{a}_{j} \right|\right\}=p_f \beta_{jj}\left(\frac{p_r \tau \beta_{jj}}{1+ p_r \tau \sum_{i=1}^{L} \beta_{ij}}\var\{\theta\} + \frac{1+ p_r \tau \sum_{i\ne j} \beta_{ij}}{1+ p_r \tau \sum_{i=1}^{L} \beta_{ij}}\right).
\]
\end{lemma}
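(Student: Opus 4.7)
The plan is to view the lemma as a direct specialization of the general moment identities \eqref{eq:haexpec} and \eqref{eq:ha2expec}: those formulas already give $\expec[\underline{h}_{jl}\underline{a}_l]$ and $\expec[|\underline{h}_{jl}\underline{a}_l|^2]$ for arbitrary $(j,l)$, so I only need to set $l=j$, pull the deterministic scalar $\sqrt{p_f\beta_{jj}}$ through the expectation, and assemble the variance as $\expec[|X|^2]-|\expec[X]|^2$.

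For the expectation claim, substituting $l=j$ into \eqref{eq:haexpec} gives $\expec[\underline{h}_{jj}\underline{a}_j]=\sqrt{p_r\tau\beta_{jj}/(1+p_r\tau\sum_{i=1}^{L}\beta_{ij})}\,\expec[\theta]$, and multiplying by the nonrandom scalar $\sqrt{p_f\beta_{jj}}$ immediately produces the stated formula.

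For the variance claim, I interpret $\var\{|\cdot|\}$ as the standard complex-variance quantity $\expec[|X|^2]-|\expec[X]|^2$; this is precisely the form in which the term enters the effective-noise calculation leading to \eqref{eq:achrate}, and it is consistent with the observation that $\expec[\underline{h}_{jj}\underline{a}_j]=\expec[\|\underline{\hat{h}}_{jj}^{\dagger}\|]$ is already real and nonnegative, so the outer modulus in $|\expec[\cdot]|^2$ is innocuous. Evaluating \eqref{eq:ha2expec} at $l=j$ and subtracting the square of \eqref{eq:haexpec} at $l=j$ leaves the common prefactor $p_r\tau\beta_{jj}/(1+p_r\tau\sum_{i=1}^{L}\beta_{ij})$ multiplying $\expec[\theta^2]-(\expec[\theta])^2=\var\{\theta\}$, while the additive term $(1+p_r\tau\sum_{i\ne j}\beta_{ij})/(1+p_r\tau\sum_{i=1}^{L}\beta_{ij})$ passes through unchanged. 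The overall scalar $p_f\beta_{jj}$ then factors out of both summands because variance is quadratically homogeneous.

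There is no significant technical obstacle, since the structural work has already been done in \eqref{eq:ha}--\eqref{eq:ha2expec}: by \eqref{eq:ratio} the unit-norm direction of $\underline{a}_l$ coincides with that of $\underline{\hat{h}}_{jl}^{\dagger}$ for every $j$ (this is pilot contamination in its cleanest form), and the MMSE orthogonality decomposition $\underline{h}_{jl}=\underline{\hat{h}}_{jl}+\underline{\tilde{h}}_{jl}$, together with independence of the two pieces, kills the cross terms and reduces the residual second moment to the error-covariance scaling. The proof is therefore essentially algebraic bookkeeping on top of these ingredients; the only point requiring any judgment is the notational one of reading $\var\{|\cdot|\}$ as the complex variance consistent with the effective-noise expression in \eqref{eq:effnoise}.
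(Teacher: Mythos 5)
Your proposal is correct and follows essentially the same route as the paper, which likewise derives the lemma by specializing (\ref{eq:haexpec}) and (\ref{eq:ha2expec}) to $l=j$ and using $\var\{\theta\}=\expec[\theta^2]-\expec^2[\theta]$. Your explicit handling of the scalar $\sqrt{p_f\beta_{jj}}$ and the reading of $\var\{|\cdot|\}$ as $\expec[|X|^2]-|\expec[X]|^2$ just spells out details the paper leaves implicit.
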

\begin{IEEEproof}
The proof follows from (\ref{eq:haexpec}) and (\ref{eq:ha2expec}). Note that $\var\{\theta\} =\expec[\theta^2]-(\expec[\theta])^2$ by definition.
\end{IEEEproof}

\begin{lemma}
\label{lemma:moments}
For both signal and interference terms in (\ref{rcvd:xj}), the first and second order moments are as follows:
\[
\expec\left[\sqrt{p_f \beta_{jl}} \underline{h}_{jl} \underline{a}_{l} {q}_{l}\right]=0,
\]
\[
\expec\left[\left|\sqrt{p_f \beta_{jl}} \underline{h}_{jl} \underline{a}_{l} {q}_{l}\right|^2\right]=p_f \beta_{jl}\left(\frac{p_r \tau \beta_{jl}}{1+ p_r \tau \sum_{i=1}^{L} \beta_{il}}\expec[\theta^2] + \frac{1+ p_r \tau \sum_{i\ne j} \beta_{il}}{1+ p_r \tau \sum_{i=1}^{L} \beta_{il}}\right).
\]
\end{lemma}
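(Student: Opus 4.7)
The proof should be essentially immediate given the machinery already assembled in equations (\ref{eq:haexpec}) and (\ref{eq:ha2expec}), together with the standing assumptions on the data symbols. The plan is to factor out the data symbol $q_l$ from both moment expressions using independence, then plug in the channel-gain moments that were derived just above the lemma.

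First I would observe that $q_l$ is a data symbol independent of everything in the channel/training layer: it is independent of $\underline{h}_{jl}$ (the true downlink channel) and of $\underline{a}_l = \underline{\hat{h}}_{ll}^{\dagger}/\|\underline{\hat{h}}_{ll}\|$ (which is a deterministic function of the uplink training observation $\mathbf{Y}_l$, and thus of the $\{h_{ilm}\}$ and the training noise). By the assumed properties $\expec[q_l]=0$ and $\expec[|q_l|^2]=1$, independence therefore gives
\[
\expec\!\left[\sqrt{p_f\beta_{jl}}\,\underline{h}_{jl}\underline{a}_l q_l\right]
=\sqrt{p_f\beta_{jl}}\;\expec[\underline{h}_{jl}\underline{a}_l]\;\expec[q_l]=0,
\]
which handles the first-moment claim (the specific value of $\expec[\underline{h}_{jl}\underline{a}_l]$ given by (\ref{eq:haexpec}) is not even needed here, only the scalar factorization).

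For the second moment, I would again invoke independence of $q_l$ from $\underline{h}_{jl}\underline{a}_l$ to write
\[
\expec\!\left[\left|\sqrt{p_f\beta_{jl}}\,\underline{h}_{jl}\underline{a}_l q_l\right|^2\right]
=p_f\beta_{jl}\,\expec\!\left[|\underline{h}_{jl}\underline{a}_l|^2\right]\,\expec[|q_l|^2]
=p_f\beta_{jl}\,\expec\!\left[|\underline{h}_{jl}\underline{a}_l|^2\right],
\]
and then simply substitute the expression for $\expec[|\underline{h}_{jl}\underline{a}_l|^2]$ that was computed in (\ref{eq:ha2expec}). This yields exactly the right-hand side stated in the lemma.

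There is essentially no hard step here; the real work was done earlier when deriving (\ref{eq:ha}) from the MMSE-estimate simplification and decomposing $\underline{h}_{jl}\underline{a}_l$ into its estimate-aligned and estimation-error parts. The only point that deserves an explicit word in the write-up is the independence argument justifying the factorization of $q_l$ out of the expectations — in particular noting that the lemma covers both $l=j$ (signal) and $l\ne j$ (inter-cell interference) uniformly, since in both cases $q_l$ is generated independently of the channel estimates used to form any precoder.
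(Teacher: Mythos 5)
Your proposal is correct and follows essentially the same route as the paper's own proof: factor out $q_l$ by independence and use $\expec[q_l]=0$ for the first moment, then apply $\expec[|q_l|^2]=1$ together with (\ref{eq:ha2expec}) for the second moment. Nothing is missing.
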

\begin{IEEEproof}
Since $\expec[{q}_{l}]=0$ and ${q}_{l}$ is independent of $\underline{h}_{jl}$ and $\underline{a}_{l}$, it is clear that 
\[
\expec\left[\sqrt{p_f \beta_{jl}} \underline{h}_{jl} \underline{a}_{l} {q}_{l}\right]=0.
\] The proof of the second order moment follows directly from (\ref{eq:ha2expec}).
\end{IEEEproof}

The main result of this section is given in the next theorem. This theorem provides a closed-form expression for the achievable rates under the setting considered in this section, i.e., one user per cell ($K=1$), same training sequence used by all users ($\underline{\psi}_{j1}=\underline{\psi},\forall j$) and zero-forcing (ZF) precoding.

\begin{theorem}
\label{thm:achrate_sim}
For the setting considered, the achievable rate of the user in the $j$-th cell during downlink transmission in (\ref{eq:achrate}) is given by
\begin{equation}
\label{eq:Rj}
R_j = C\left(\frac{p_f \beta_{jj}\frac{p_r \tau \beta_{jj}}{\kappa_j}\expec^2[\theta]}{1+p_f \beta_{jj}\frac{p_r \tau \beta_{jj}}{\kappa_j}\var\{\theta\}+\sum_{l \ne j} p_f \beta_{jl}\frac{p_r \tau \beta_{jl}}{\kappa_l}\expec[\theta^2] + \sum_{l=1}^{L} p_f \beta_{jl} \frac{1+ p_r \tau \sum_{i\ne j} \beta_{il}}{\kappa_l}}\right),
\end{equation}
where $\kappa_j = 1+ p_r \tau \sum_{i=1}^{L} \beta_{ij}$, $\expec[\theta]=\frac{\Gamma(M+\frac{1}{2})}{\Gamma(M)}$, $\expec[\theta^2]=M$ and $\var\{\theta\}=M-\expec^2[\theta]$. Here, $\Gamma(\cdot)$ is the Gamma function. For large M, the following limiting expression for achievable rate can be obtained:
\begin{equation}
\label{eq:limRj}
\lim_{M \rightarrow \infty} R_j = C\left(\frac{\frac{\beta_{jj}^2}{1+ p_r \tau \sum_{i=1}^{L} \beta_{ij}}}{\sum_{l \ne j} \frac{\beta_{jl}^2}{1+ p_r \tau \sum_{i=1}^{L} \beta_{il}}}\right).
\end{equation}
\end{theorem}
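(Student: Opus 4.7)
The plan is in two parts: first establish the exact rate (\ref{eq:Rj}) by inserting Lemmas \ref{lemma:effch} and \ref{lemma:moments} into the generic achievable rate formula (\ref{eq:achrate}) specialized to $K=1$, and then evaluate the moments of $\theta$ and pass to the limit.

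For the first part, with $K=1$ the subscript $k$ vanishes and the sum $\sum_{(l,i)\ne(j,k)}$ in (\ref{eq:achrate}) collapses to $\sum_{l\ne j}$. The numerator $|\expec[g^j_j]|^2$ is read off directly from Lemma \ref{lemma:effch}. For the denominator, I would write the self-term variance $\var\{|g^j_j|\}$ from Lemma \ref{lemma:effch} and the interference second moments $\expec[|g^j_l|^2]$ from Lemma \ref{lemma:moments} as sums of two contributions each: one arising from the MMSE channel estimate $\underline{\hat h}_{jl}$ (carrying the $\var\{\theta\}$ or $\expec[\theta^2]$ factor) and one arising from the MMSE residual $\underline{\tilde h}_{jl}$ (carrying the $(1+p_r\tau\sum_{i\ne j}\beta_{il})/\kappa_l$ factor). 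The residual pieces for $l=j$ and $l\ne j$ then merge into the single sum $\sum_{l=1}^L p_f\beta_{jl}(1+p_r\tau\sum_{i\ne j}\beta_{il})/\kappa_l$, and together with the additive noise $1$ and the two remaining $\theta$-moment contributions we recover (\ref{eq:Rj}).

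For the moments of $\theta$, note that each $|u_m|^2$ is $\mathrm{Exp}(1)$, so $\theta^2=\sum_{m=1}^M|u_m|^2$ is Gamma$(M,1)$. Then $\expec[\theta^2]=M$ is immediate, and
\[
\expec[\theta]=\expec\!\left[(\theta^2)^{1/2}\right]=\frac{1}{\Gamma(M)}\int_0^\infty x^{M-1/2}e^{-x}\,dx=\frac{\Gamma(M+\tfrac12)}{\Gamma(M)},
\]
with $\var\{\theta\}=M-\expec^2[\theta]$ by definition.

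For the large-$M$ limit, I would invoke the Stirling consequence $\Gamma(M+\tfrac12)/\Gamma(M)=\sqrt{M}\,(1+O(1/M))$, which yields $\expec^2[\theta]/M\to 1$ and keeps $\var\{\theta\}$ bounded. Dividing numerator and denominator of (\ref{eq:Rj}) by $M\cdot p_f p_r\tau$, the $1$, the $\var\{\theta\}$ self-fluctuation term, and the final sum in the denominator each scale as $O(1/M)$ and vanish, while the signal and interference terms carrying $\expec[\theta^2]=M$ survive; what remains is exactly (\ref{eq:limRj}). The only mildly delicate step is the bookkeeping of the estimate/residual decomposition across the self-term $l=j$ and the interference terms $l\ne j$, so that the $1/\kappa_l$ factors align properly and a single $\sum_{l=1}^L$ emerges; everything else reduces to Stirling and algebra.
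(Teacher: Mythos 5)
Your proposal is correct and follows essentially the same route as the paper: substitute Lemmas \ref{lemma:effch} and \ref{lemma:moments} into (\ref{eq:achrate}), identify $\theta^2$ as a Gamma$(M,1)$ (equivalently, $\theta$ as a scaled chi) variable to get $\expec[\theta]=\Gamma(M+\tfrac12)/\Gamma(M)$ and $\expec[\theta^2]=M$, and then let $M\to\infty$. The only cosmetic difference is that you invoke the standard asymptotic $\Gamma(M+\tfrac12)/\Gamma(M)=\sqrt{M}\,(1+O(1/M))$ directly, whereas the paper rederives it from the duplication formula plus Stirling; both give $\expec^2[\theta]/M\to1$ and $\var\{\theta\}/M\to0$, which is all the limit requires.
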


\begin{IEEEproof}
The proof of (\ref{eq:Rj}) follows by substituting the results of Lemma \ref{lemma:effch} and Lemma \ref{lemma:moments} in (\ref{eq:achrate}). Since $\theta$ has a scaled (by a factor of $1/\sqrt{2}$) chi distribution with $2M$ degrees of freedom, it is straightforward to see that $\expec[\theta]=\frac{\Gamma(M+\frac{1}{2})}{\Gamma(M)}$, $\expec[\theta^2]=M$ and $\var\{\theta\}=M-\expec^2[\theta]$.

Using the duplication formula 
\[
\Gamma(z)\Gamma\left(z+\frac{1}{2}\right)=2^{(1-2z)}\sqrt{\pi}\Gamma(2z)
\] and Stirling's formula 
\[
\lim_{n \rightarrow \infty} \frac{n!}{\sqrt{2\pi n}n^n e^{-n}} = 1,
\] we obtain
\begin{eqnarray}
\lim_{M \rightarrow \infty} \frac{1}{\sqrt{M}}\frac{\Gamma\left(M+\frac{1}{2}\right)}{\Gamma(M)} & = & \lim_{M \rightarrow \infty} \sqrt{\frac{\pi}{{M}}}2^{(1-2M)}\frac{(2M-1)!}{(M-1)!(M-1)!}, \nonumber \\
& = & \lim_{M \rightarrow \infty} \sqrt{\frac{\pi}{{M}}}2^{(1-2M)} \frac{\sqrt{2\pi(2M-1)} (2M-1)^{(2M-1)}e^{1-2M}}{2\pi(M-1) (M-1)^{2(M-1)}e^{2(1-M)}}, \nonumber \\
& = & \lim_{M \rightarrow \infty} \sqrt{\frac{2M-1}{{2M}}}\left(1+\frac{1}{2(M-1)}\right)^{2M-1} e^{-1}, \nonumber \\
& = & 1. \nonumber
\end{eqnarray}
Therefore, $\lim_{M \rightarrow \infty} \frac{\expec^2[\theta]}{M} = 1$ and $\lim_{M \rightarrow \infty} \frac{\var\{\theta\}}{M} = 0$. This completes the proof of (\ref{eq:limRj}).
\end{IEEEproof}

For large $M$, the value of $\var\{\theta\}$ ($\approx 1/4$) is insignificant compared to $M$. The results of the above theorem show that the performance does saturate with $M$. Typically, the reverse link is interference-limited, i.e., $p_r \tau \sum_{i=1}^{L} \beta_{il} \gg 1, \forall j$. The term $\sum_{i=1}^{L} \beta_{il}$ is the expected sum of squares of the propagation coefficients between the base station in the $j$-th cell and all users. Therefore, $\sum_{i=1}^{L} \beta_{il}$ is generally constant with respect to $j$. Using these approximations in (\ref{eq:limRj}), we get $$ R_j \approx C\left(\frac{\beta_{jj}^2}{\sum_{l \ne j} \beta_{jl}^2} \right).$$ This clearly show that the impact of pilot contamination can be very significant if cross gains (between cells) are of the same order of direct gains (within the same cell). It suggests frequency/time reuse and pilot reuse techniques to reduce the cross gains (in the same frequency/time) relative to the direct gains. The benefits of frequency reuse in the limit of an infinite number of antennas were demonstrated in \cite{Marzetta09}.

\begin{remark}
Our result in Theorem \ref{thm:achrate_sim} is not an asymptotic result. The expression in (\ref{eq:Rj}) is exact for any value of the number of antennas $M$ at the base stations. Hence, this expression can be used to find the appropriate frequency/time reuse scheme for any given value of $M$ and other system parameters. We do not focus on this in this paper, as this would depend largely on the actual system parameters including the cell layout and the shadowing model.
\end{remark}

\begin{remark}
The result in Theorem \ref{thm:achrate_sim} is for the setting with one user per cell. In the general setting with $K$ users per cell, the analysis in this section does not directly extend to provide a closed-form expression. However, (\ref{eq:achrate}) can be numerically evaluated in the general setting.
\end{remark}

To summarize, the impact of uplink training with non-orthogonal pilots can be serious when the cross-gains are not small compared to the direct gains. This pilot contamination problem is often neglected in theory and even in many large-scale simulations. The analysis in this section shows the need to account for this impact especially in systems with high reuse of training sequences. In addition to uplink training in TDD systems, which is the focus of this paper, the pilot contamination problem would appear in other scenarios as well as it is fundamental to training with non-orthogonal pilots. 

Next, we proceed to develop a new precoding method referred to as the multi-cell MMSE-based precoding in this paper.

\section{Multi-Cell MMSE-Based Precoding}
\label{sec:mmseprecoding}

In the previous section, we show that pilot contamination severely impacts the system performance by increasing the inter-cell interference. In particular, we show that the inter-cell interference grows like the intended signal with the number of antennas $M$ at the base stations while using zero-forcing precoding. Therefore, in the presence of pilot contamination, in addition to frequency/time/pilot reuse schemes, it is crucial to account for inter-cell interference while designing a precoding method. Furthermore, since pilot contamination is originating from the non-orthogonal training sequences, it is important to account for the training sequence allocation while designing a precoding method. The approach of accounting for inter-cell interference while designing a precoding method is common, while the approach of accounting for the training sequence allocation is not. Again, the usual approach is to decouple the channel estimation and precoding completely. However, while using non-orthogonal pilots, this is not the right approach. These observations follow from our pilot contamination analysis in the previous section.

The precoding problem cannot be directly formulated as a joint optimization problem as different base stations have different received training signals. In other words, the problem is decentralized in nature. Therefore, one approach is to apply single-cell precoding methods. For example, since we assume orthogonal training sequences in every cell, we can perform zero-forcing on the users in every cell. The precoding matrix corresponding to this zero-forcing approach is given by
\begin{equation}
\label{eq:ZF}
\mathbf{A}_{l}=\frac{\mathbf{\hat{G}}^{\dagger}_{ll}\left(\mathbf{\hat{G}}_{ll}\mathbf{\hat{G}}^{\dagger}_{ll}\right)^{-1}}{\sqrt{\tr\left[\left(\mathbf{\hat{G}}_{ll}\mathbf{\hat{G}}^{\dagger}_{ll}\right)^{-1}\right]}},
\end{equation}
where $\mathbf{\hat{G}}_{ll} = \sqrt{p_f}\mathbf{D}^{\frac{1}{2}}_{ll}\mathbf{\hat{H}}_{ll}$.
However, this zero-forcing precoding or other single-cell precoding methods do not account for the training sequence allocation, which is potentially the right approach to mitigate the pilot contamination problem. We explore this next.

In order to determine the precoding matrices, we formulate an optimization problem for each precoding matrix. Consider the $j$-th cell. The signal received by the users in this cell given by (\ref{rcvd:vec}) is a function of all the precoding matrices (used at all the base stations). Therefore, the MMSE-based precoding methods for single-cell setting considered in \cite{GPS08} does not extend (directly) to this setting. Let us consider the signal and interference terms corresponding to the base station in the $l$-th cell. Based on these terms, we formulate the following optimization problem to obtain the precoding matrix $\mathbf{A}_{l}$. We use the following notation: $\mathbf{{F}}_{jl} = \sqrt{p_f}\mathbf{D}^{\frac{1}{2}}_{jl}\mathbf{{H}}_{jl}$, $\mathbf{\hat{F}}_{jl} = \sqrt{p_f}\mathbf{D}^{\frac{1}{2}}_{jl}\mathbf{\hat{H}}_{jl}$ and $\mathbf{\tilde{F}}_{jl} = \mathbf{{F}}_{jl} - \mathbf{\hat{F}}_{jl}$ for all $j$ and $l$. The optimization problem is:
\begin{eqnarray}
\label{eq:MMSEform}
&\text{minimize}_{\mathbf{A}_{l},\alpha} & \expec_{\mathbf{\tilde{F}}_{jl},\underline{z}_l,\underline{q}_l}\left[\left\|\alpha(\mathbf{{F}}_{ll}\mathbf{A}_{l}\underline{q}_l+\underline{z}_l)-\underline{q}_l\right\|^2 + \sum_{j\ne l} \left\|\alpha\gamma(\mathbf{{F}}_{jl}\mathbf{A}_{l}\underline{q}_l)\right\|^2 \right]  \\
&\text{subject to} &\tr\{\mathbf{A}_{l}^{\dagger}\mathbf{A}_{l}\}=1. \nonumber
\end{eqnarray}
This objective function is very intuitive. The objective function of the problem (\ref{eq:MMSEform}) consists of two parts: (\emph{i}) the sum of squares of ``errors'' seen by the users in the $l$-th cell, and (\emph{ii}) the sum of squares of interference by the users in all other cells. The parameter $\gamma$ of the optimization problem ``controls'' the relative weights associated with these two parts. The parameter $\alpha$ is important as it ``virtually'' corresponds to the potential scaling that can be performed at the users. The optimal solution to the problem (\ref{eq:MMSEform}) denoted by $\mathbf{A}_{l}^{opt}$ is the multi-cell MMSE-based precoding matrix.

Next, we obtain a closed-form expression for $\mathbf{A}_{l}^{opt}$. The following lemma is required later for obtaining the optimal solution to the problem (\ref{eq:MMSEform}).

\begin{lemma}
\label{lemma:deltajl}
Consider the optimization problem (\ref{eq:MMSEform}). For all $j$ and $l$, $$\expec\left[\mathbf{\tilde{F}}_{jl}^{\dagger}\mathbf{\tilde{F}}_{jl}\right] = \delta_{jl} \mathbf{I}_M,$$ where
\begin{equation}
\label{eq:deltajl}
\delta_{jl} = p_f\tr \left\{\mathbf{D}_{jl}\left(\mathbf{I}_K + p_r \tau \mathbf{D}_{jl}^{\frac{1}{2}} \mathbf{\Psi}_j^{\dagger} \mathbf{\Lambda}_{jl}\mathbf{\Psi}_j\mathbf{D}_{jl}^{\frac{1}{2}}\right)^{-1}\right\},
\end{equation}
and 
\[\mathbf{\Lambda}_{jl}=\left(\mathbf{I}+p_r \tau \sum_{i \ne j} \mathbf{\Psi}_i \mathbf{D}_{il} \mathbf{\Psi}_i^{\dagger}\right)^{-1}.
\]
\end{lemma}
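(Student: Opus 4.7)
The overall plan is to reduce the expectation $\expec[\mathbf{\tilde{F}}_{jl}^{\dagger}\mathbf{\tilde{F}}_{jl}]$ to a per-antenna MMSE error computation, exploit the independence of the channel across the $M$ base-station antennas, and then apply the matrix inversion lemma to recognize the covariance in the form stated in (\ref{eq:deltajl}).

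First I would rewrite $\mathbf{\tilde{F}}_{jl}=\sqrt{p_f}\,\mathbf{D}_{jl}^{1/2}\mathbf{\tilde{H}}_{jl}$ so that
\[
\expec[\mathbf{\tilde{F}}_{jl}^{\dagger}\mathbf{\tilde{F}}_{jl}] = p_f\,\expec[\mathbf{\tilde{H}}_{jl}^{\dagger}\mathbf{D}_{jl}\mathbf{\tilde{H}}_{jl}].
\]
Let $\underline{h}_{jl}^{(m)}$ denote the $m$-th column of $\mathbf{H}_{jl}$ (a $K$-vector) and likewise $\underline{\hat h}_{jl}^{(m)}$, $\underline{\tilde h}_{jl}^{(m)}$. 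From (\ref{eq:ylm}) the observation $\underline{y}_{lm}$ depends only on the $m$-th columns of the various $\mathbf{H}_{il}$ and on $\underline{w}_{lm}$, and these are independent across $m$. Hence the columns $\{\underline{\tilde h}_{jl}^{(m)}\}_{m=1}^M$ are i.i.d.\ zero-mean, which makes the $(m,m')$ entry of $\expec[\mathbf{\tilde{H}}_{jl}^{\dagger}\mathbf{D}_{jl}\mathbf{\tilde{H}}_{jl}]$ vanish for $m\neq m'$ and equal a common scalar for $m=m'$. This already gives the claimed shape $\delta_{jl}\mathbf{I}_M$, with
\[
\delta_{jl}=p_f\,\tr\!\bigl(\mathbf{D}_{jl}\,\mathbf{C}_{jl}\bigr),\qquad \mathbf{C}_{jl}:=\expec\!\bigl[\underline{\tilde h}_{jl}^{(m)}(\underline{\tilde h}_{jl}^{(m)})^{\dagger}\bigr],
\]
so the whole problem reduces to identifying $\mathbf{C}_{jl}$.

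Next I would compute $\mathbf{C}_{jl}$ as the error covariance of the MMSE estimator of the $K$-vector $\underline{h}_{jl}^{(m)}$ from the $\tau$-vector $\underline{y}_{lm}$. Writing $\underline{y}_{lm}=\sqrt{p_r\tau}\,\mathbf{\Psi}_j\mathbf{D}_{jl}^{1/2}\underline{h}_{jl}^{(m)}+\underline{n}$, the interference-plus-noise $\underline{n}$ has covariance $\mathbf{\Lambda}_{jl}^{-1}=\mathbf{I}+p_r\tau\sum_{i\neq j}\mathbf{\Psi}_i\mathbf{D}_{il}\mathbf{\Psi}_i^{\dagger}$ and is independent of $\underline{h}_{jl}^{(m)}\sim\mathcal{CN}(\underline 0,\mathbf{I}_K)$. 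The standard linear MMSE formula combined with the matrix inversion (Woodbury) lemma gives
\[
\mathbf{C}_{jl}=\bigl(\mathbf{I}_K+p_r\tau\,\mathbf{D}_{jl}^{1/2}\mathbf{\Psi}_j^{\dagger}\mathbf{\Lambda}_{jl}\mathbf{\Psi}_j\mathbf{D}_{jl}^{1/2}\bigr)^{-1}.
\]
Substituting this into $\delta_{jl}=p_f\,\tr(\mathbf{D}_{jl}\mathbf{C}_{jl})$ yields exactly (\ref{eq:deltajl}).

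The main obstacle is purely technical: one has to justify carefully that the covariance of the estimation error in a \emph{multi-source} linear Gaussian observation model (with multiple interfering $\mathbf{\Psi}_i\mathbf{D}_{il}^{1/2}\underline{h}_{il}^{(m)}$ terms present in $\underline{y}_{lm}$) is still given by treating the non-target terms as coloured noise with covariance $\mathbf{\Lambda}_{jl}^{-1}$. This is a routine consequence of the orthogonality principle together with the independence of $\underline{h}_{jl}^{(m)}$ and $\{\underline{h}_{il}^{(m)}\}_{i\neq j}$, but it must be invoked explicitly so that the Woodbury manipulation producing the inner $\mathbf{\Lambda}_{jl}$ is rigorous. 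Apart from that, the rest is bookkeeping.
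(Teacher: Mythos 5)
Your proposal is correct and follows essentially the same route as the paper: reduce $\expec[\mathbf{\tilde{F}}_{jl}^{\dagger}\mathbf{\tilde{F}}_{jl}]$ to the per-antenna MMSE error covariance, use the i.i.d.\ structure across the $M$ columns to obtain the form $\delta_{jl}\mathbf{I}_M$, and apply the matrix inversion lemma to produce the inner $\mathbf{\Lambda}_{jl}$. The only (cosmetic) difference is that the paper first writes the error covariance as $\mathbf{I}_K$ minus the covariance of the estimate built from the full observation covariance and then invokes the inversion lemma, whereas you reach the same expression directly via the coloured-noise MMSE formula $\left(\mathbf{I}_K+\mathbf{G}^{\dagger}\mathbf{R}_n^{-1}\mathbf{G}\right)^{-1}$ -- two equivalent packagings of the same Woodbury identity.
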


\begin{IEEEproof}
Let $\underline{\tilde{f}}_{jlm}$ denote the $m$-th column of $\mathbf{\tilde{F}}_{jl}$. Similarly, we define $\underline{{h}}_{jlm}$ and $\underline{\hat{h}}_{jlm}$. From (\ref{eq:mmse}), we have
\begin{eqnarray}
\underline{\tilde{f}}_{jlm} &= &\sqrt{p_f}\mathbf{D}^{\frac{1}{2}}_{jl}(\underline{{h}}_{jlm} - \underline{\hat{h}}_{jlm}), \nonumber \\
&=& \sqrt{p_f}\mathbf{D}^{\frac{1}{2}}_{jl}\left(\underline{{h}}_{jlm} - \sqrt{p_r \tau } \mathbf{D}_{jl}^{\frac{1}{2}} \mathbf{\Psi}_j^{\dagger} \left(\mathbf{I}+p_r \tau \sum_{i=1}^{L} \mathbf{\Psi}_i \mathbf{D}_{il} \mathbf{\Psi}_i^{\dagger}\right)^{-1} \underline{y}_{lm} \right), \nonumber
\end{eqnarray}
where $\underline{y}_{lm}$ is given by (\ref{eq:ylm}). For given $j$ and $l$, it is clear that $\left\{\underline{\tilde{f}}_{jlm}\right\}_{m=1}^{M}$ is i.i.d. zero-mean $\mathcal{CN}$ distributed. Hence, $\expec\left[\mathbf{\tilde{F}}_{jl}^{\dagger}\mathbf{\tilde{F}}_{jl}\right] = \delta_{jl} \mathbf{I}_M$ where
\begin{eqnarray}
\delta_{jl} &= &\expec\left[\underline{\tilde{f}}_{jlm}^{\dagger}\underline{\tilde{f}}_{jlm}\right], \nonumber \\
&=& p_f\tr \left\{\mathbf{D}^{\frac{1}{2}}_{jl}\left(\mathbf{I}_K - \expec[\underline{\hat{h}}_{jlm}\underline{\hat{h}}_{jlm}^{\dagger}]\right)\mathbf{D}^{\frac{1}{2}}_{jl}\right\}, \nonumber \\
&=& p_f\tr \left\{\mathbf{D}^{\frac{1}{2}}_{jl}\left(\mathbf{I}_K - p_r \tau \mathbf{D}_{jl}^{\frac{1}{2}} \mathbf{\Psi}_j^{\dagger} \left(\mathbf{I}+p_r \tau \sum_{i=1}^{L} \mathbf{\Psi}_i \mathbf{D}_{il} \mathbf{\Psi}_i^{\dagger}\right)^{-1}\mathbf{\Psi}_j\mathbf{D}_{jl}^{\frac{1}{2}}\right)\mathbf{D}^{\frac{1}{2}}_{jl}\right\}, \nonumber \\
&=& p_f\tr \left\{\mathbf{D}^{\frac{1}{2}}_{jl}\left(\mathbf{I}_K + p_r \tau \mathbf{D}_{jl}^{\frac{1}{2}} \mathbf{\Psi}_j^{\dagger} \left(\mathbf{I}+p_r \tau \sum_{i \ne j} \mathbf{\Psi}_i \mathbf{D}_{il} \mathbf{\Psi}_i^{\dagger}\right)^{-1}\mathbf{\Psi}_j\mathbf{D}_{jl}^{\frac{1}{2}}\right)^{-1}\mathbf{D}^{\frac{1}{2}}_{jl}\right\}. \nonumber
\end{eqnarray}
The last step follows from matrix inversion lemma. This completes the proof of the lemma.
\end{IEEEproof}

The main result of this section is given by the following theorem. This theorem provides a closed-form expression for the multi-cell MMSE-based precoding matrix. 

\begin{theorem}
\label{th:Aopt}
The optimal solution to the problem (\ref{eq:MMSEform}) is
\begin{equation}
\label{eq:Aoptthm}
\mathbf{A}_{l}^{opt} = \frac{1}{\alpha^{opt}}\left(\mathbf{\hat{F}}_{ll}^{\dagger}\mathbf{\hat{F}}_{ll}+\gamma^2\sum_{j\ne l} \mathbf{\hat{F}}_{jl}^{\dagger}\mathbf{\hat{F}}_{jl}+\eta \mathbf{I}_M\right)^{-1}\mathbf{\hat{F}}_{ll}^{\dagger},
\end{equation}
where $$\eta=\delta_{ll}+\gamma^2\sum_{j\ne l}\delta_{jl}+K,$$ $\delta_{jl}$ is given by (\ref{eq:deltajl}) and $\alpha^{opt}$ is such that $\tr\left\{\left(\mathbf{A}_{l}^{opt}\right)^{\dagger} \mathbf{A}_{l}^{opt}\right\}=1.$
\end{theorem}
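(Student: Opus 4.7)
My plan is to expand the objective in (\ref{eq:MMSEform}) into a tidy quadratic form in $\mathbf{A}_l$, then apply Lagrange multipliers treating the auxiliary scalar $\alpha$ as a second decision variable.

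First, substitute $\mathbf{F}_{jl} = \hat{\mathbf{F}}_{jl} + \tilde{\mathbf{F}}_{jl}$ into every squared norm and take the expectation. The MMSE property $\expec[\tilde{\mathbf{F}}_{jl}]=\mathbf{0}$ together with the mutual independence of $\tilde{\mathbf{F}}_{jl}$, $\underline{z}_l$, and $\underline{q}_l$ kills all cross terms. Using $\expec[\underline{q}_l\underline{q}_l^{\dagger}]=\mathbf{I}_K$ and $\expec[\underline{z}_l\underline{z}_l^{\dagger}]=\mathbf{I}_K$ to convert each $\expec\|\cdot\|^2$ into a trace, and Lemma \ref{lemma:deltajl} to replace $\expec[\tilde{\mathbf{F}}_{jl}^{\dagger}\tilde{\mathbf{F}}_{jl}]$ by $\delta_{jl}\mathbf{I}_M$, the objective collapses to
\[
J(\mathbf{A}_l,\alpha)\;=\;\alpha^2\,\tr\{\mathbf{A}_l^{\dagger}\mathbf{P}\,\mathbf{A}_l\}+\alpha^2 K-2\alpha\,\Re[\tr\{\hat{\mathbf{F}}_{ll}\mathbf{A}_l\}]+K,
\]
where $\mathbf{P}:=\hat{\mathbf{F}}_{ll}^{\dagger}\hat{\mathbf{F}}_{ll}+\gamma^2\sum_{j\ne l}\hat{\mathbf{F}}_{jl}^{\dagger}\hat{\mathbf{F}}_{jl}+(\delta_{ll}+\gamma^2\sum_{j\ne l}\delta_{jl})\mathbf{I}_M$. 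Since $\eta=\delta_{ll}+\gamma^2\sum_{j\ne l}\delta_{jl}+K$, the matrix $\mathbf{P}+K\mathbf{I}_M$ is precisely the one whose inverse appears in (\ref{eq:Aoptthm}); the whole proof now amounts to verifying that the Lagrange multiplier produces exactly this $+K$ shift.

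Next, I would form the Lagrangian $L = J + \mu(\tr\{\mathbf{A}_l^{\dagger}\mathbf{A}_l\}-1)$. Setting $\partial L/\partial\mathbf{A}_l^{*}=\mathbf{0}$ gives the stationarity condition
\[
\mathbf{A}_l\;=\;\frac{1}{\alpha}\bigl(\mathbf{P}+\lambda\mathbf{I}_M\bigr)^{-1}\hat{\mathbf{F}}_{ll}^{\dagger},\qquad\lambda:=\mu/\alpha^2,
\]
which structurally matches (\ref{eq:Aoptthm}); only the numerical value of $\lambda$ is in doubt. Differentiating $L$ with respect to $\alpha$ yields a scalar relation $\alpha\tr\{\mathbf{A}_l^{\dagger}\mathbf{P}\mathbf{A}_l\}+\alpha K=\Re[\tr\{\hat{\mathbf{F}}_{ll}\mathbf{A}_l\}]$.

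The main obstacle — and the only non-routine step — is pinning down $\lambda=K$. The trick is the algebraic splitting $\mathbf{P}=(\mathbf{P}+\lambda\mathbf{I}_M)-\lambda\mathbf{I}_M$, which gives $(\mathbf{P}+\lambda\mathbf{I}_M)^{-1}\mathbf{P}(\mathbf{P}+\lambda\mathbf{I}_M)^{-1}=(\mathbf{P}+\lambda\mathbf{I}_M)^{-1}-\lambda(\mathbf{P}+\lambda\mathbf{I}_M)^{-2}$. Substituting the closed form for $\mathbf{A}_l$ into the $\alpha$-condition and applying this identity causes the $\tr\{\hat{\mathbf{F}}_{ll}(\mathbf{P}+\lambda\mathbf{I}_M)^{-1}\hat{\mathbf{F}}_{ll}^{\dagger}\}$ pieces on both sides to cancel, leaving
\[
\lambda\,\tr\{\hat{\mathbf{F}}_{ll}(\mathbf{P}+\lambda\mathbf{I}_M)^{-2}\hat{\mathbf{F}}_{ll}^{\dagger}\}\;=\;\alpha^2 K.
\]
The power constraint, applied to the same closed form, reads $\tr\{\hat{\mathbf{F}}_{ll}(\mathbf{P}+\lambda\mathbf{I}_M)^{-2}\hat{\mathbf{F}}_{ll}^{\dagger}\}=\alpha^2$. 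Dividing the two relations yields $\lambda=K$, so $\mathbf{P}+\lambda\mathbf{I}_M=\mathbf{P}+K\mathbf{I}_M$ is the matrix inside the inverse in (\ref{eq:Aoptthm}), and $\alpha^{opt}$ is pinned down by the constraint, completing the proof.
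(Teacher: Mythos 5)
Your proposal is correct and follows essentially the same route as the paper: expand the objective into a quadratic in $\mathbf{A}_l$ using Lemma \ref{lemma:deltajl}, form the Lagrangian, obtain $\mathbf{A}_l=\frac{1}{\alpha}(\mathbf{P}+\lambda\mathbf{I}_M)^{-1}\hat{\mathbf{F}}_{ll}^{\dagger}$ from stationarity, and then combine the $\alpha$-stationarity condition with the power constraint to force $\lambda=K$. The only cosmetic differences are that the paper completes the square instead of taking a Wirtinger derivative, and pins down $\lambda$ by diagonalizing $\mathbf{R}$ and differentiating the resulting scalar function, whereas you use the resolvent identity $(\mathbf{P}+\lambda\mathbf{I}_M)^{-1}\mathbf{P}(\mathbf{P}+\lambda\mathbf{I}_M)^{-1}=(\mathbf{P}+\lambda\mathbf{I}_M)^{-1}-\lambda(\mathbf{P}+\lambda\mathbf{I}_M)^{-2}$ directly on the traces — an equivalent and arguably cleaner computation.
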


\begin{proof}
First, we simplify the objective function $J(\mathbf{A}_{l},\alpha)$ of the problem (\ref{eq:MMSEform}) as follows:
\begin{eqnarray}
J(\mathbf{A}_{l},\alpha) & = & \expec \left[\left\|\alpha\left(\mathbf{{F}}_{ll}\mathbf{A}_{l}\underline{q}_l+\underline{z}_l\right)-\underline{q}_l\right\|^2 + \sum_{j\ne l} \left\|\alpha\gamma\mathbf{{F}}_{jl}\mathbf{A}_{l}\underline{q}_l\right\|^2 \right], \nonumber \\
& = & \expec \left[\left\|\left(\alpha\mathbf{{F}}_{ll}\mathbf{A}_{l}-\mathbf{I}_K\right)\underline{q}_l\right\|^2 + \sum_{j\ne l} \left\|\alpha\gamma\mathbf{{F}}_{jl}\mathbf{A}_{l}\underline{q}_l\right\|^2 \right]+ \alpha^2 K, \nonumber \\
& = & \tr \left\{\expec \left[\left(\alpha\mathbf{{F}}_{ll}\mathbf{A}_{l}-\mathbf{I}_K\right)^{\dagger}\left(\alpha\mathbf{{F}}_{ll}\mathbf{A}_{l}-\mathbf{I}_K\right) + \sum_{j\ne l} \alpha^2\gamma^2\mathbf{A}_{l}^{\dagger}\mathbf{{F}}_{jl}^{\dagger}\mathbf{{F}}_{jl}\mathbf{A}_{l} \right]\right\}+ \alpha^2 K, \nonumber \\
& = & \tr \left\{\alpha^2\mathbf{A}_{l}^{\dagger}\expec \left[\mathbf{{F}}_{ll}^{\dagger}\mathbf{{F}}_{ll}\right]\mathbf{A}_{l} + \sum_{j\ne l} \alpha^2\gamma^2\mathbf{A}_{l}^{\dagger}\expec \left[\mathbf{{F}}_{jl}^{\dagger}\mathbf{{F}}_{jl}\right]\mathbf{A}_{l} - \alpha\mathbf{A}_{l}^{\dagger}\mathbf{\hat{F}}_{ll}^{\dagger} -\alpha\mathbf{\hat{F}}_{ll}\mathbf{A}_{l}  \right\}+ \left(\alpha^2+1\right) K,\nonumber \\
& = & \tr \left\{\alpha^2\mathbf{A}_{l}^{\dagger}\left(\mathbf{\hat{F}}_{ll}^{\dagger}\mathbf{\hat{F}}_{ll}+\gamma^2\sum_{j\ne l} \mathbf{\hat{F}}_{jl}^{\dagger}\mathbf{\hat{F}}_{jl}+\left(\delta_{ll}+\gamma^2\sum_{j\ne l}\delta_{jl}\right)\mathbf{I}_M\right)\mathbf{A}_{l} - \alpha\mathbf{A}_{l}^{\dagger}\mathbf{\hat{F}}_{ll}^{\dagger} -\alpha\mathbf{\hat{F}}_{ll}\mathbf{A}_{l}  \right\}+ \nonumber \\
& & \left(\alpha^2+1\right) K. \nonumber
\end{eqnarray}
The last step follows from Lemma \ref{lemma:deltajl}. 

Now, consider the Lagrangian formulation
$$L\left(\mathbf{A}_{l},\alpha,\lambda\right) = J\left(\mathbf{A}_{l},\alpha\right) + \lambda\left(\tr\left\{\mathbf{A}_{l}^{\dagger}\mathbf{A}_{l}\right\}-1\right)$$ for the problem (\ref{eq:MMSEform}). Let 
\[
\mathbf{R}=\mathbf{\hat{F}}_{ll}^{\dagger}\mathbf{\hat{F}}_{ll}+\gamma^2\sum_{j\ne l} \mathbf{\hat{F}}_{jl}^{\dagger}\mathbf{\hat{F}}_{jl}+\left(\delta_{ll}+\gamma^2\sum_{j\ne l}\delta_{jl}+\frac{\lambda}{\alpha^2}\right)\mathbf{I}_M,
\] $\mathbf{U}=\alpha\mathbf{R}^{\frac{1}{2}}\mathbf{A}_{l}$ and $\mathbf{V}=\mathbf{R}^{-\frac{1}{2}}\mathbf{\hat{F}}_{ll}^{\dagger}$. We have
\begin{equation}
\label{eq:lagr}
L\left(\mathbf{A}_{l},\alpha,\lambda\right) = \left\|\mathbf{U}- \mathbf{V}\right\|^2 - \tr\left\{\mathbf{\hat{F}}_{ll}\mathbf{R}^{-1}\mathbf{\hat{F}}_{ll}^{\dagger} \right\}+\left(\alpha^2+1\right) K-\lambda.
\end{equation}
This can be easily verified by expanding the right hand side. It is clear from (\ref{eq:lagr}) that, for any given $\alpha$ and $\lambda$, $L(\mathbf{A}_{l},\alpha,\lambda)$ is minimized if and only if $\mathbf{U} = \mathbf{V}$. Hence, we obtain
\begin{equation}
\label{eq:Aoptfirst}
\mathbf{A}_{l}^{opt} =\frac{1}{\alpha}\mathbf{R}^{-1}\mathbf{\hat{F}}_{ll}^{\dagger}.
\end{equation}

Let $L(\alpha,\lambda) = L(\mathbf{A}_{l}^{opt},\alpha,\lambda)$. Now, we have
\begin{equation}
\label{eq:L}
L(\alpha,\lambda) = - \tr\left\{\mathbf{\hat{F}}_{ll}\mathbf{R}^{-1}\mathbf{\hat{F}}_{ll}^{\dagger} \right\}+\left(\alpha^2+1\right) K-\lambda.
\end{equation}
Note that $\mathbf{\hat{F}}_{ll}^{\dagger}\mathbf{\hat{F}}_{ll}+\gamma^2\sum_{j\ne l} \mathbf{\hat{F}}_{jl}^{\dagger}\mathbf{\hat{F}}_{jl}$ can be factorized in the form $\mathbf{S}^{\dagger}\diag\{[c_1 \: c_2 \: \cdots \: c_M]\}\mathbf{S}$ where $\mathbf{S}^{\dagger}\mathbf{S}=\mathbf{I}_{M}.$ Let $\delta = \delta_{ll}+ \gamma^2\sum_{j\ne l}\delta_{jl}$. Therefore,

\begin{eqnarray}
\mathbf{R}^{-1} & = &\left(\mathbf{S}^{\dagger}\diag\{[c_1 \: c_2 \: \cdots \: c_M]\}\mathbf{S}+\left(\delta+\frac{\lambda}{\alpha^2}\right)\mathbf{I}_M\right)^{-1} \nonumber, \\
& = &\left(\mathbf{S}^{\dagger}\diag\left\{\left[c_1+\delta+\frac{\lambda}{\alpha^2} \: c_2+\delta+\frac{\lambda}{\alpha^2} \: \cdots \: c_M+\delta+\frac{\lambda}{\alpha^2}\right]\right\}\mathbf{S}\right)^{-1}, \nonumber \\
\label{eq:Rinv}
& = &\mathbf{S}^{\dagger}\diag\left\{\left[\left(c_1+\delta+\frac{\lambda}{\alpha^2}\right)^{-1} \: \left(c_2+\delta+\frac{\lambda}{\alpha^2}\right)^{-1} \: \cdots \: \left(c_M+\delta+\frac{\lambda}{\alpha^2}\right)^{-1}\right]\right\}\mathbf{S}.
\end{eqnarray}
Substituting (\ref{eq:Rinv}) in (\ref{eq:L}), we get
\begin{equation}
\label{eq:Lsimp}
L(\alpha,\lambda) = -\sum_{m=1}^{M} \frac{d_m}{c_m+\delta+\frac{\lambda}{\alpha^2}}+(\alpha^2+1) K-\lambda,
\end{equation}
where $d_m$ is the $(m,m)$-th entry of $\mathbf{S}\mathbf{\hat{F}}_{ll}^{\dagger}\mathbf{\hat{F}}_{ll}\mathbf{S}^{\dagger}.$
Consider the equations obtained by differentiating (\ref{eq:Lsimp}) w.r.t. $\alpha$ and $\lambda$ and equating to zero:
\begin{equation}
\label{eq:difflamda}
\sum_{m=1}^{M} \frac{d_m}{\left(c_m+\delta+\frac{\lambda}{\alpha^2}\right)^2}\frac{1}{\alpha^2}=1,
\end{equation}
\begin{equation}
\label{eq:diffalpha}
-\sum_{m=1}^{M} \frac{d_m}{\left(c_m+\delta+\frac{\lambda}{\alpha^2}\right)^2}\frac{2\lambda}{\alpha^3}+2 \alpha K=0.
\end{equation}
Substituting (\ref{eq:difflamda}) in (\ref{eq:diffalpha}), we get 
\begin{equation}
\label{eq:lamdaalpha}
\frac{\lambda}{\alpha^2} = K.
\end{equation}
Combining the results in (\ref{eq:Aoptfirst}), (\ref{eq:Rinv}), and (\ref{eq:lamdaalpha}), we have
$$\mathbf{A}_{l}^{opt} =\frac{1}{\alpha^{opt}}\left(\mathbf{\hat{F}}_{ll}^{\dagger}\mathbf{\hat{F}}_{ll}+\gamma^2\sum_{j\ne l} \mathbf{\hat{F}}_{jl}^{\dagger}\mathbf{\hat{F}}_{jl}+\left(\delta_{ll}+\gamma^2\sum_{j\ne l}\delta_{jl}+K\right)\mathbf{I}_M\right)^{-1}\mathbf{\hat{F}}_{ll}^{\dagger},$$
where $\alpha^{opt}$ is such that $\|\mathbf{A}_{l}^{opt}\|^2=1.$ This completes the proof.
\end{proof}

The precoding described above is primarily suited for maximizing the minimum of the rates achieved by all the users. When the performance metric of interest is sum rate, this precoding can be combined with power control, scheduling, and other similar techniques. Since our main concern is the inter-cell interference resulting from pilot contamination, and to avoid too complicated systems, we do not use that possibility in this paper. In the next section, all numerical results and comparisons are performed without power control.

\section{Numerical Results}
\label{sec:numresults}

\begin{figure}[!t]
\centering
\includegraphics[width=150mm]{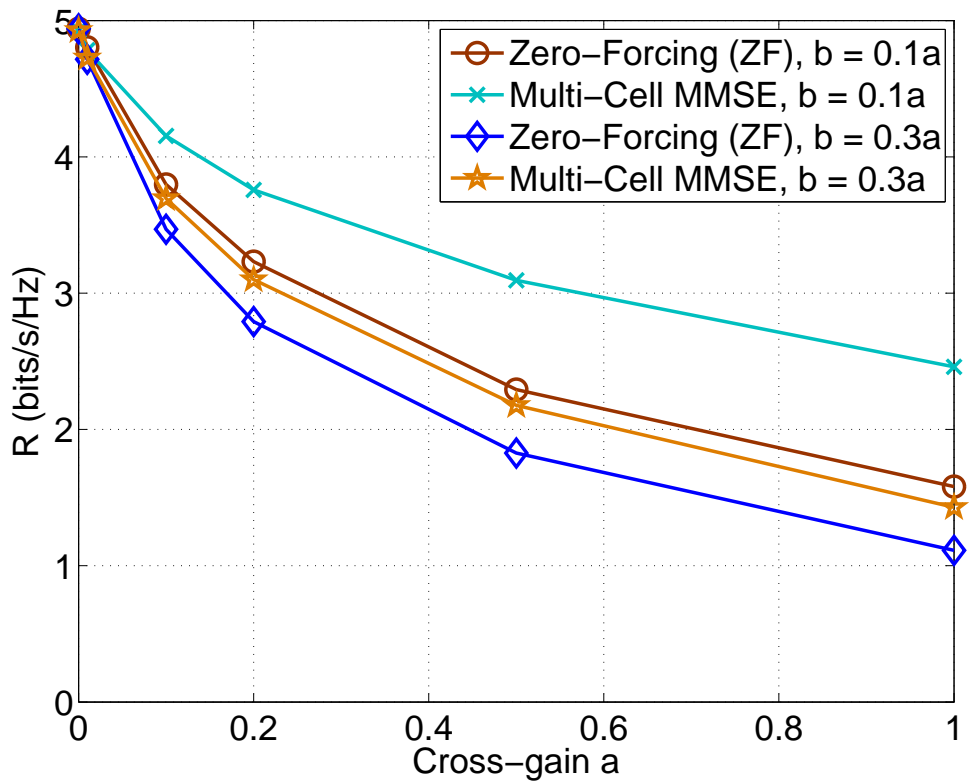}
\caption{Comparison of zero-forcing and multi-cell MMSE precoding methods; $a$ and $b$ correspond to different cross-gains and $R$ denotes the minimum rate achieved by all users}
\label{fig:plot3}
\end{figure}

\begin{figure}[!t]
\centering
\includegraphics[width=150mm]{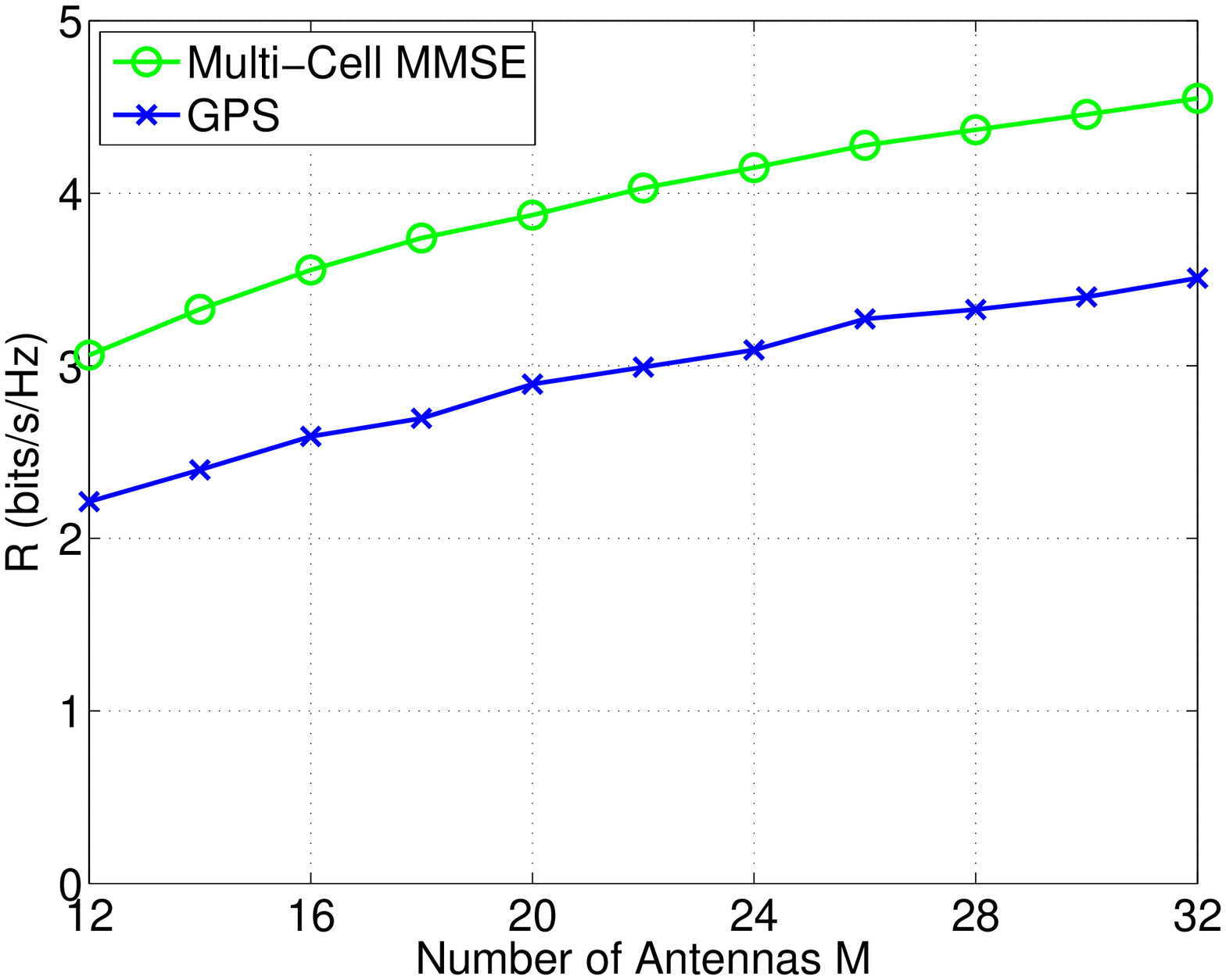}
\caption{Comparison of GPS and multi-cell MMSE precoding methods with $a=0.8$ and $b=0.1a$; $R$ denotes the minimum rate achieved by all users}
\label{fig:L4 min rate}
\end{figure}

We consider a multi-cell system with $L=4$ cells, $M=8$ antennas at all base stations, $K=2$ users in every cell and training length of $\tau=4$. We consider $p_f = 20$ dB and $p_r = 10$ dB. Orthogonal training sequences are collectively used within the $1$-st and $2$-nd cells. The training sequences used in the $1$-st ($2$-nd) cell are reused in the $3$-rd ($4$-th) cell. Thus, we model a scenario where training sequences are reused. We keep the propagation factors as follows: for all $k$, $\beta_{jlk}=1$ if $j=l$, $\beta_{jlk}=a$ if $(j,l) \in \{(1,2),(2,1),(3,4),(4,3)\}$, and $\beta_{jlk}=b$ for all other values of $j$ and $l$. ``Frequency reuse'' is handled semi-quantitatively by adjusting the cross-gains.

The performance metric of interest is the minimum rate achieved by all users denoted by $R = \min_{jk}R_{jk}$. Multi-Cell MMSE precoding denotes the new precoding method developed in this paper given in (\ref{eq:Aoptthm}) with parameter $\gamma$ set to unity. ZF precoding denotes the popular zero-forcing precoding given in (\ref{eq:ZF}). GPS denotes the single-cell precoding method suggested in \cite{GPS08}, which is a special case of the precoding given in (\ref{eq:Aoptthm}) with parameter $\gamma$ set to zero. In Figure \ref{fig:plot3}, we plot the performance of ZF and multi-cell MMSE precoding methods for different values of $a$ and $b$. We observe significant advantage of using multi-cell MMSE precoding for wide range of values of $a$ and $b$. In Figure \ref{fig:L4 min rate}, we plot the performance of GPS and multi-cell MMSE precoding methods as a function of the number of antennas $M$ (same training sequences). In summary, these numerical results show that the new multi-cell MMSE precoding offers significant performance gain over popular single-cell precoding methods.

\section{Conclusion}
\label{sec:conc}
In this paper, we characterize the impact of corrupted channel estimates caused by pilot contamination in TDD systems. When non-orthogonal training sequences are assigned to users, the precoding matrix used at a (multiple antenna) base station becomes correlated with the channel to users in other cells (referred to as pilot contamination). For a special setting that captures pilot contamination, we obtain a closed-form expressions for the achievable rates. Using these analytical expressions, we show that, in the presence of pilot contamination, rates achieved by users saturate with the number of base station antennas. We conclude that appropriate frequency/time reuse techniques have to be employed to overcome this saturation effect. The fact that pilot contamination hasn't surfaced in FDD studies suggests that researchers are assuming partial CSI with independently corrupted noise, and are not fully incorporating the impact of channel estimation.

Next, we develop a multi-cell MMSE-based precoding that depends on the set of training sequences assigned to the users. We obtain this precoding as the solution to an optimization problem whose objective function consists of two parts: (\emph{i}) the mean-square error of signals received at the users in the same cell, and (\emph{ii}) the mean-square interference caused at the users in other cells. We show that this precoding method reduce both intra-cell interference and inter-cell interference, and thus is similar in spirit to existing joint-cell precoding techniques. The primary differences between joint-cell precoding and our approach are that a.) our approach is distributed in nature and, b.) we explicitly take into account the set of training sequences assigned to the users. Through numerical results, we show that our method outperforms popular single-cell precoding methods.





\end{document}